\newlength{\figurewidth}
\newlength{\smallfigurewidth}
\DeclareMathOperator{\argmax}{argmax}
\DeclareMathOperator{\polylog}{polylog}
\DeclareMathOperator{\NCD}{NCD}
\let\epsilon\varepsilon
\let\eps\varepsilon
\newlength{\commentWidth}
\let\oldnl\nl
\newcommand{\nonl}{\renewcommand{\nl}{\let\nl\oldnl}}
\newtheorem{theorem}{Theorem}[section]
\newtheorem{lemma}[theorem]{Lemma}
\newtheorem{definition}[theorem]{Definition}
\newtheorem{corollary}[theorem]{Corollary}
\newtheorem{observation}[theorem]{Observation}
\begin{document}

\title
{
\large
\textbf{Sketching and Streaming for Dictionary Compression}
}

\author{%
Ruben Becker$^{\ast}$, Matteo Canton$^{\dag}$, Davide Cenzato$^{\ast}$, \\ Sung-Hwan Kim$^{\ast}$, Bojana Kodric$^{\ast}$,
and Nicola Prezza$^{\ast}$\\[0.5em]
{\small\begin{minipage}{\linewidth}\begin{center}
\begin{tabular}{ccc}
$^{\ast}$Ca' Foscari University of Venice & \hspace*{0.5in} & $^{\dag}$University of Udine \\
Via Torino 155 && Via delle Scienze 206 \\
30172 Venezia, Italy && 33100 Udine, Italy\\
\url{firstname.lastname@unive.it} && \url{canton.matteo@spes.uniud.it }
\end{tabular}
\end{center}\end{minipage}}
}

\maketitle
\thispagestyle{empty}

\bigskip

\begin{abstract}
    We initiate the study of sub-linear sketching and streaming techniques for estimating the output size of common dictionary compressors such as Lempel-Ziv '77, the run-length Burrows-Wheeler transform, and grammar compression. To this end, we focus on a measure that has recently gained much attention in the information-theoretic community and which approximates up to a polylogarithmic multiplicative factor the output sizes of those compressors: the normalized substring complexity function $\delta$. 
    As a matter of fact, $\delta$ itself is a very accurate measure of compressibility: it is monotone under concatenation, invariant under reversals and alphabet permutations, sub-additive, and asymptotically tight (in terms of worst-case entropy) for representing strings, up to polylogarithmic factors.

    We present a data sketch of $O(\epsilon^{-3}\log n + \epsilon^{-1}\log^2 n)$ words that allows computing a multiplicative $(1\pm \epsilon)$-approximation of $\delta$ with high probability, where $n$ is the string length. 
    The sketches of two strings $S_1,S_2$ can be merged in $O(\epsilon^{-1}\log^2 n)$ time to yield the sketch of $\{S_1,S_2\}$, speeding up the computation of \emph{Normalized Compression Distances} (NCD).
    If random access is available on the input, our sketch can be updated in $O(\epsilon^{-1}\log^2 n)$ time for each character right-extension of the string. 
    This yields a polylogarithmic-space algorithm for approximating $\delta$,  improving exponentially over the working space of the state-of-the-art algorithms running in nearly-linear time.
    Motivated by the fact that random access is not always available on the input data, we then 
    present a streaming algorithm computing our sketch in $O(\sqrt n \cdot \log n)$ working space and $O(\epsilon^{-1}\log^2 n)$ worst-case delay per character.
    We show that an implementation of our streaming algorithm can estimate $\delta$ on a dataset of 189GB with a throughput of 203MB per minute while using only 5MB of RAM, and that our sketch speeds up the computation of all-pairs NCD distances by one order of magnitude, with applications to phylogenetic tree reconstruction.
\end{abstract}

\section{Introduction}
Sketching techniques allow to summarize in sub-linear space information on big datasets, enabling the approximation of useful statistics such as high-order moments \cite{alon1996space}, norms \cite{Johnson1984ExtensionsOL}, and frequencies \cite{MisraG82} (to name a few). Additionally, most data sketches can be computed on data streams in sub-linear space, making them attractive in big data scenarios.
In this paper, we consider data sketches summarizing the \emph{information content} of a string as approximated by data compression techniques. 
Previous research on this problem has focused on \emph{empirical entropy}. 
Chakrabarti et al.\ \cite{ChakrabartiEntropyStream} showed that the zero-order empirical entropy $H_0$ of a data stream can be efficiently approximated up to a multiplicative $(1 + \epsilon)$-factor in poly-logarithmic space, but any multiplicative approximation of the $k$-th order entropy $H_k$ requires nearly-linear space for $k \geq 1$. 
In addition to this fact, it is well-known that $H_k$ is a weak measure when the dataset is highly repetitive \cite{KREFT2013115}.
As extensively shown in the literature (see, for example, the survey by Navarro \cite{Navarro21a}), dictionary compression measures such as the number
$z$ of phrases of the Lempel-Ziv'77 factorization (used by \texttt{winzip}, \texttt{7-zip}, \texttt{gzip}, \texttt{xz}), the number $r$ of equal-letter runs in the Burrows-Wheeler transform (used by \texttt{bzip-2}), and the size $g$ of a smallest context-free grammar generating (only) the text, are exempt from such a limitation.
The information-theoretic quality of these measures is strengthened by the fact that \emph{Normalized Compression Distances} based on dictionary compressors yield very precise notions of string similarity \cite{cilibrasi2005clustering}. Sketching and streaming techniques for such measures would thus speed up tasks such as the computation of all-pairs similarities when the underlying metric is based on data compression (useful, for example, in the computation of phylogenetic trees \cite{cilibrasi2005clustering}). 

Motivated by the above considerations, in this paper we present the first sub-linear-space sketching and streaming techniques for estimating the output sizes of dictionary compressors.  
This result is obtained by describing a data sketch yielding a  $(1\pm \epsilon)$-approximation of the \emph{normalized substring complexity} $\delta = \max_{k\geq 1}\{d_k/k\}$, where $d_k$ is the number of distinct length-$k$ substrings of the string, a measure introduced by Raskhodnikova et al.\ in~\cite{RaskhodnikovaRRS13}.
As shown by Kociumaka et al.\ \cite{KociumakaNP23} and Kempa and Kociumaka~\cite{KempaK20}, any of the above dictionary compression measures is lower-bounded by $\delta$ and upper-bounded by $\delta(\log n)^c$,  where  $n$ is the string's length and $c$ is an opportune constant depending on the compressor.
Even better, Bonnie et al.\ in \cite{bonnie2023dandd} experimentally showed that $\delta$, $z$, and $r$ (normalized to the interval $[0,1]$) are almost indistinguishable on collections of genomic data.
As a matter of fact, $\delta$ is known to be an even more  accurate information measure than $z$, $r$, and $g$: it is monotone under string concatenation, invariant under reversals and alphabet permutations, sub-additive, and asymptotically tight (in terms of worst-case entropy) for representing strings, up to polylogarithmic factors \cite{KociumakaNP23}. None of the measures $z,r,g$ possesses \emph{simultaneously} all of these properties.

\vspace{-10pt}
\paragraph{Overview of the paper.}

After providing all necessary definitions in Section \ref{sec:preliminaries}, in Section \ref{sec:delta and NCD} we prove new properties of the normalized substring complexity $\delta$ and of the Normalized Compression Distance~\cite{cilibrasi2005clustering} $\NCD_{\delta}$ based on $\delta$. In particular, we show that $\delta$ is perfectly sub-additive, that $\NCD_{\delta}(x,y)$ always lies in $[0,1]$ (according to \cite{MingSimilarityMetric2004}, this is an indicator that $\delta$ is a compressibility measure of good quality), and that $\NCD_{\tilde\delta}(x,y)$ is an additive $\Theta(\epsilon)$-approximation of $\NCD_{\delta}(x,y)$ if $\tilde \delta$ is a multiplicative $(1\pm\epsilon)$-approximation of $\delta$. This motivates designing data sketches for $\delta$, a problem that we solve in Section \ref{sec:sketch}. 
Our sketch is based on the observation (already noted in \cite{BernardiniFGP20} for the particular case $\epsilon=1$) that $\max_{i\geq 0}\{d_{\lceil(1+\epsilon)^i\rceil}/\lceil(1+\epsilon)^i\rceil\}$ is a $(1-\Theta(\epsilon))$-approximation of $\delta$. 
We approximate $d_k$, for each sampled length $k = \lceil (1+\epsilon)^i \rceil$, by keeping a count-distinct sketch \cite{KaneOptimalCountDistinct2010} for the subset of distinct (Rabin's fingerprints \cite{rabin1981fingerprinting} of the) length-$k$ substrings.
Our sketch uses space polynomial in $\epsilon^{-1}\log n$ and supports updates and queries (returning a $(1\pm \epsilon)$ approximation of $\delta$), in $O(\epsilon^{-1}\log^2 n)$ time. 
The sketches of two strings $S_1$ and $S_2$ can moreover be merged in $O(\epsilon^{-1}\log^2 n)$ time to obtain the sketch of $\{S_1, S_2\}$, from which one can compute an additive $\epsilon$-approximation of $\NCD_{\delta}(S_1, S_2)$.
In Section \ref{sec:streaming} we show how to compute our sketch in sub-linear space on an input stream of length $n$. The main difficulty in achieving sub-linear space is that, in order to compute the Rabin's fingerprints of the stream's length-$k$ substrings, we need random access to the $k$-th most recent stream's character. Since the largest $k$ for which we need to compute $d_k$ is linear in $n$,  storing the most recent $k$ characters would require $\Theta(n)$ working space. 
Our solution relies on the observation that, if $\hat k = \argmax_{k\ge 1}\{d_k/k\}$ is small, then we can afford keeping a sliding window of the last $\hat k$ stream's characters. If, on the other hand, $\hat k$ is large, then the stream is highly repetitive so we can compress it in small space while supporting bookmarked access to its characters.
We conclude in Section \ref{sec:exp} with experimental results. Complete proofs can be found in the full version \cite{becker2023sketching}.

\vspace{-10pt}
\paragraph{Related work.} Bonnie et al.\ \cite{bonnie2023dandd} have already observed that $d_k$ can be efficiently estimated by employing count-distinct sketches, and that this can yield an heuristic algorithm for estimating $\delta$. Their strategy relies on estimating $d_k/k$ for increasing values of $k$, until a local maximum is found. While this strategy works well in practice because, as they showed, $\hat k = \argmax_{k\ge 1}\{d_k/k\}$ tends to be a very small number, on particular strings (for example, Thue-Morse) $\hat k$ is of the order of $\Theta(n)$ and, as a result, computing all the sketches for $d_k$ requires linear space and quadratic processing time in the worst case.
Moreover, local maxima of $d_k/k$ do not always coincide with the global maximum, so this strategy does not yield any provable approximation of $\delta$.
We are not aware of other works in the literature describing data sketches for estimating the output sizes of \emph{dictionary} compressors (the literature on estimating empirical entropy is, on the other end, much richer: see \cite{ChakrabartiEntropyStream}  
and references therein). 
Our results can be viewed also as a space-efficient way to approximate measure $\delta$.
Christiansen et al.~\cite{ChristiansenEKN21} showed how to compute $\delta$ for a given string $T$ in linear time and space. Recently, Bernardini et al.~\cite{BernardiniFGP20}  provided space-time trade-offs for computing/approximating $\delta$ in sub-linear working space on top of the input string. If  $O(n\polylog n)$ time is allowed, their algorithms require $\Theta(n/\polylog n)$ working space, which they proved to be optimal for computing $\delta$ exactly. Our algorithm, on the other hand, computes a multiplicative $(1\pm\epsilon)$-approximation of $\delta$ using working space  polynomial in $\epsilon^{-1}\log n$. 

\section{Preliminaries}\label{sec:preliminaries}
We denote $[n]:=\{1, \ldots, n\}$ for any integer $n$ ($[n]=\emptyset$ for $n\le 0$). For $a \in \mathbb R^+$ and a real number $\eps\in [0, 1]$, we write $[(1\pm\eps)a]$ for the interval $[(1-\eps)a, (1+\eps)a]$. Similarly, we write $[a\pm \eps] $ for the interval $[a-\eps, a+\eps]$.

We assume to be given a string $S$ of length $n > 1$ over an alphabet $\Sigma$ of cardinality $\sigma > 1$. For $k\ge 1$, we define $D_k(S):=\{S[i..i + k - 1]: i\in [n - k + 1]\}$, i.e., the set of all distinct substrings of length $k$ of $S$. Notice that $D_k(S)=\emptyset$ if $k>n$. The \emph{$k$-substring complexity} $d_k(S)$ of $S$ is the cardinality of this set, i.e., $d_k(S) := |D_k(S)|$. 
The \emph{normalized substring complexity} $\delta$ is defined as follows:
\[
    \delta(S) 
    := \max_{k\ge 1} \{ |D_k(S)|/k \}
    = \max_{k\ge 1} \{ d_k(S)/k \}.
\]
We omit the argument from $D_k$, $d_k$, and $\delta$ in case it is clear from the context. 
Here, we also extend this measure to pairs of strings $S$ and $T$. 
Rather than using $\delta(ST)$, we propose the following natural definition that does not take into account artificial length-$k$ substrings crossing the border between $S$ and $T$: 

\[
    \delta(S, T) := \max_{k \ge 1} \{ |D_k(S) \cup D_k(T) | / k \}.
\]

This version also gives mathematically cleaner results (e.g., perfect sub-additivity) and, in any case, differs from $\delta(ST)$ by at most 1. As a consequence, most of our results (read also below) hold also by replacing  $\delta(S, T)$ with $\delta(ST)$. 
The \emph{Normalized Compression Distance} has been defined by Cilibrasi and Vit\'anyi~\cite{cilibrasi2005clustering} as a proxy for the non-computable Normalized Information Distance~\cite{TerwijnTV11}. For two strings $S$ and $T$ and an arbitrary compressibility measure $Z$ (for example, the output size of compression software such as \texttt{gzip} and \texttt{xz}), it is defined as

\[
    \NCD_Z(S, T) := \frac{Z(S, T) - \min \{Z(S), Z(T)\}}{\max \{Z(S), Z(T)\}}.
\]


Given a uniform prime $q = n^{\Theta(1)}$, the \emph{Rabin's fingerprint} \cite{rabin1981fingerprinting} of $S$ is defined as $\rho(S) = \sum_{i=1}^n S[i]\cdot \sigma^{n-i}  \mod q$. 
Collisions between substrings of $S$ through $\rho$ happen with low probability, so the results of our paper hold with high probability. 
We extensively use the fact that the fingerprint of the concatenation of two strings $S_1,S_2$ can be computed in constant time from (i) the fingerprints of $S_1$ and $S_2$ and (ii) $\sigma^{|S_2|}\mod q$ (see \cite{rabin1981fingerprinting}).
Given a set $B \subseteq \Sigma^*$ of strings, we define $\rho(B) = \{\rho(s) \ :\ s\in B\}$.


Given a set $U$, a \emph{count-distinct sketch} $CD(U)$ is a sub-linear-space data structure supporting three main operations: $CD(U).add(x)$, which turns the sketch into $CD(U\cup \{x\})$, $CD(U_1).merge(CD(U_2))$, which turns the sketch into $CD(U_1 \cup U_2)$, and  $CD(U).estimate()$, which returns a $(1\pm \epsilon)$ approximation of $|U|$. In our work, we use the optimal count-distinct sketch of Kane et al.\ \cite{KaneOptimalCountDistinct2010}. Letting $U \subseteq [u]$, this sketch uses $O(\epsilon^{-2} + \log u)$ words of space and computes a $(1\pm \epsilon)$ approximation of $|U|$ with high probability of success. All operations are supported in $O(\log u)$ time\footnote{The authors claim $O(\epsilon^{-2} + \log u)$ \emph{bits} of space and $2/3$ success probability, which can be amplified by taking the median of $\Theta(\log u)$ sketches (thus yielding the bounds we claim above). In our paper, the universe is composed by Rabin's fingerprints and has therefore size $u = n^{\Theta(1)}$.}. 


Assume $S[1]=\$$, where $\$$ is lexicographically smaller than all other alphabet's characters and does not appear anywhere else in $S$. The \emph{Burrows-Wheeler transform} (BWT) of the reverse $S^R$ of $S$ is obtained by sorting lexicographically all suffixes of $S^R$  and then taking, in this order, the character preceding each suffix. For example, if $S =\$babba$, then the sorted suffixes and the BWT of $S^R$ are shown in Table \ref{fig:example BWT}.

\begin{table}[ht!]
    \centering
\begin{tabular}{l|c}
      suffixes of $S^R$ & BWT\\ \hline
         \$ & \textbf b \\
         a b \$ & \textbf b \\
         a b b a b \$ & \emph \$ \\
         b \$  & \textbf  a\\
         b a b \$ & \emph b\\
         b b a b \$ & \textbf a\\
\end{tabular}
\vspace{-5mm}
\caption{Burrows-Wheeler transform $BWT(S^R)$ of the string $S^R = abbab\$$. 
}\label{fig:example BWT}
\vspace{-5mm}
\end{table}

The \emph{LF property} of the BWT states that the $i$-th occurrence of $c\in \Sigma$ in the BWT corresponds 
to the position of the $i$-th suffix starting with $c\in \Sigma$ in Table~\ref{fig:example BWT}. 
The \emph{LF function} is the permutation of $[1,n]$ implementing this observation: for instance, in the above example $BWT.LF(2) = 5$ because character $BWT[2]$ corresponds to the first character ($b$) of the fifth (in lexicographic order) suffix $bab\$$.
We denote with $r$ the number of \emph{equal-letter runs} of the BWT; in the above example, $r=5$ (runs are highlighted in alternating bold/italic). 
We moreover use the following result:

\begin{lemma}[\cite{policriti2018lz77}, Thm.~2]\label{lem:PP-RLBWT}
    Letting $S$ be a string and $r$ be the number of equal-letter runs in $BWT(S^R)$, there exists a data structure of $O(r)$ words storing $BWT(S^R)$
    supporting right-extensions of $S$ (i.e. $BWT(S^R) \rightarrow BWT((Sa)^R)$, for any $a\in \Sigma$) in $O(\log|S|)$ time. Within the same time, the structure supports computing the LF function and retrieving any character of $BWT(S^R)$.
\end{lemma}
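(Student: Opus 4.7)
My plan is to maintain $BWT(S^R)$ as a run-length-encoded dynamic sequence by storing the $r$ maximal equal-letter runs in a balanced binary search tree ordered by run position. Each node would keep (i) the run's character $c \in \Sigma$, (ii) its length, and (iii) subtree aggregates: the total length of the subtree and, for every character, the cumulative count of occurrences in the subtree. This immediately gives $O(r)$ words of space. Character retrieval $BWT[i]$ reduces to a descent in the tree, subtracting subtree lengths, in $O(\log r)=O(\log |S|)$ time. The character rank $\operatorname{rank}_c(BWT, i)$ is computed along the same descent by adding the per-character aggregates of the left subtrees that we skip over. Together with a global cumulative-count array $C[\cdot]$ of size $O(\sigma)$, this yields $LF(i) = C[BWT[i]] + \operatorname{rank}_{BWT[i]}(BWT, i)$ in $O(\log |S|)$ time.

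For the right-extension $S\to Sa$, observe that $(Sa)^R = aS^R$, so we prepend a character to $S^R$. Comparing the sorted suffixes of $S^R$ and of $aS^R$: every old suffix of $S^R$ keeps the same preceding character except the suffix that is all of $S^R$, whose preceding character becomes $a$; additionally, one new suffix, $aS^R$ itself, is inserted into the sorted order, and its preceding character is the one dictated by the cyclic/terminator convention. The position of the suffix $aS^R$ in the sorted order can be obtained by one LF-style computation: it falls among the suffixes starting with $a$, at the rank given by the position of the former ``full text'' suffix among them, which is exactly $C[a]$ plus an offset readable from the previously stored information. The position of the ``full text'' suffix in the old $BWT$ is maintained as a single pointer, updated at each extension in $O(\log |S|)$ time via one $LF$ step.

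The update therefore consists of $O(1)$ point modifications to the run-length sequence: one character rewrite at the ``full text'' position and one character insertion at the newly computed position. Each such local modification affects at most $O(1)$ runs — a rewrite may split one run into at most three and trigger at most two merges with its neighbors, and an insertion can split one run in two and add a single-character run that may merge with an adjacent run of the same character. The balanced BST supports split, merge, insert, delete, and aggregate updates along a root-to-leaf path in $O(\log r)$ time, and the global $C[\cdot]$ array is updated in $O(1)$ time per character change. Thus every operation — $LF$, character access, and right-extension — costs $O(\log |S|)$, while the space stays in $O(r)$ words throughout.

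The main obstacle I anticipate is the case analysis for the character inserted at the new position: one must argue carefully, using the sentinel/cyclic convention, that exactly one $BWT$ entry is inserted and exactly one is rewritten, and that the tracked pointer to the ``full text'' row can be updated consistently after each extension. Once this bookkeeping is fixed, the remainder is a standard dynamic run-length-BWT argument: a balanced tree over runs combined with per-character subtree counts, giving both the $O(\log |S|)$ time bound and the $O(r)$ space bound.
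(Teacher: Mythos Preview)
The paper does not prove this lemma at all: it is quoted as Theorem~2 of \cite{policriti2018lz77} and used as a black box, so there is no ``paper's own proof'' to compare against. Your sketch is, at a high level, exactly how that cited construction works --- runs stored in a balanced BST, rank/access via subtree aggregates, $LF$ from rank plus a cumulative array, and the online update that overwrites the \$ cell with the new character and inserts a fresh \$ at the LF-computed position while tracking the \$ row with a pointer.

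One genuine technical gap in your write-up: storing at every BST node ``for every character, the cumulative count of occurrences in the subtree'' costs $O(r\sigma)$ words, not $O(r)$. The bound in the lemma is $O(r)$ words with no dependence on $\sigma$, and the construction in \cite{policriti2018lz77} achieves this by a different organization (essentially one balanced search structure per alphabet symbol over the runs of that symbol, cross-linked with the main run tree, so that $\operatorname{rank}_c$ is answered without per-node $\sigma$-sized arrays). If you only need the result for constant-size alphabets your argument goes through as written; for general $\sigma$ you would need this refinement to match the stated space bound.
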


\section{Properties of $\delta$ and $\NCD_\delta$}\label{sec:delta and NCD}
We start by proving some properties of $\delta$. 
Proofs of some statements are omitted due to space limitations and can be found in the full version \cite{becker2023sketching}.
The first main property that we show is that $\delta$ is both sub-additive and monotone in the following sense. 
\begin{restatable}{lemma}{subadditivity}\label{lemma: subadditivity}
    For any strings $S$ and $T$, $\max\{\delta(S), \delta(T)\} \le \delta(S, T) \le \delta(S) + \delta(T)$.
\end{restatable}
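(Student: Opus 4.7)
The plan is to prove the two inequalities separately, and both follow from elementary set-theoretic properties of the union applied to each length $k$ before taking the maximum. Neither direction requires any non-trivial combinatorial insight, so I do not anticipate a real obstacle; the proof is essentially a one-liner for each bound.

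For the lower bound $\max\{\delta(S),\delta(T)\} \le \delta(S,T)$, I would fix an arbitrary $k \ge 1$ and observe that $D_k(S) \subseteq D_k(S) \cup D_k(T)$, so $d_k(S)/k \le |D_k(S)\cup D_k(T)|/k$. Taking the maximum over $k$ on the left gives $\delta(S) \le \delta(S,T)$; the symmetric argument gives $\delta(T) \le \delta(S,T)$, whence the maximum is bounded by $\delta(S,T)$.

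For the upper bound $\delta(S,T) \le \delta(S)+\delta(T)$, I would again fix $k \ge 1$ and apply the trivial union bound $|D_k(S)\cup D_k(T)| \le |D_k(S)| + |D_k(T)|$, so that
\[
\frac{|D_k(S)\cup D_k(T)|}{k} \;\le\; \frac{d_k(S)}{k} + \frac{d_k(T)}{k} \;\le\; \delta(S) + \delta(T),
\]
where the last step uses the definition of $\delta$ as a maximum over all lengths. Taking the maximum over $k$ on the left-hand side yields $\delta(S,T) \le \delta(S)+\delta(T)$, completing the proof. The only mild subtlety worth noting is that the maximum of a sum is bounded by the sum of the individual maxima, which is what lets us decouple the two summands in the last inequality.
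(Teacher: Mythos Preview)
Your proof is correct and follows essentially the same route as the paper's own argument: both use $D_k(S)\subseteq D_k(S)\cup D_k(T)$ for the lower bound and the union bound $|D_k(S)\cup D_k(T)|\le |D_k(S)|+|D_k(T)|$ for the upper bound, with the only cosmetic difference being that the paper phrases things via explicit maximizers $k_S,k_T,k_{S,T}$ whereas you argue pointwise in $k$ before passing to the maximum.
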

The proof of the lemma uses the properties of the corresponding maximizers together with the fact that the union is a superset of both its arguments (left inequality) and that the union is of smaller cardinality than the sum of the cardinalities of its arguments (right inequality).
We remark that it is a well-known fact that the monotonicity property holds for the case of concatenation of the two strings~\cite{KociumakaNP23}. 
Using the sub-additivity of $\delta$, we  obtain:
\begin{corollary}\label{cor: ncd range}
    For any strings $S$ and $T$ it holds that $0 \le \NCD_\delta(S, T) \le 1$.
\end{corollary}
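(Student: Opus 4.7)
The plan is to derive both bounds directly from Lemma~\ref{lemma: subadditivity} by simple algebraic manipulation of the definition
\[
    \NCD_\delta(S, T) = \frac{\delta(S, T) - \min\{\delta(S), \delta(T)\}}{\max\{\delta(S), \delta(T)\}}.
\]
First I would note that the denominator is strictly positive: for any nonempty string $S$ we have $\delta(S) \ge d_1(S)/1 \ge 1$, so $\max\{\delta(S), \delta(T)\} \ge 1 > 0$. Hence the sign of $\NCD_\delta(S,T)$ agrees with the sign of its numerator, and the inequality $\NCD_\delta(S,T) \le 1$ is equivalent to the numerator being at most the denominator.

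For the lower bound, I would invoke the monotonicity half of Lemma~\ref{lemma: subadditivity}, namely $\delta(S,T) \ge \max\{\delta(S), \delta(T)\}$. Since $\max\{\delta(S), \delta(T)\} \ge \min\{\delta(S), \delta(T)\}$, we get $\delta(S,T) - \min\{\delta(S), \delta(T)\} \ge 0$, so $\NCD_\delta(S,T) \ge 0$.

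For the upper bound, I would invoke the sub-additivity half of Lemma~\ref{lemma: subadditivity}, namely $\delta(S,T) \le \delta(S) + \delta(T)$. Writing $\delta(S) + \delta(T) = \max\{\delta(S),\delta(T)\} + \min\{\delta(S),\delta(T)\}$ and rearranging yields
\[
    \delta(S,T) - \min\{\delta(S), \delta(T)\} \le \max\{\delta(S), \delta(T)\},
\]
and dividing by the (positive) denominator gives $\NCD_\delta(S,T) \le 1$.

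There is really no obstacle here; the entire content of the corollary is packaged in Lemma~\ref{lemma: subadditivity}, and the only thing to be a little careful about is ensuring the denominator does not vanish, which is immediate from $d_1 \ge 1$ on nonempty alphabets. In fact, the two chains of inequalities above can be combined into a one-line display: the numerator lies in $[0, \max\{\delta(S),\delta(T)\}]$, so the quotient lies in $[0,1]$.
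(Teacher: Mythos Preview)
Your proof is correct and follows essentially the same approach as the paper: both derive $\NCD_\delta(S,T)\ge 0$ from the monotonicity half of Lemma~\ref{lemma: subadditivity} and $\NCD_\delta(S,T)\le 1$ from its sub-additivity half. The only cosmetic difference is that the paper fixes $\max\{\delta(S),\delta(T)\}=\delta(S)$ w.l.o.g.\ before computing, whereas you keep $\min$ and $\max$ symbolic and additionally spell out why the denominator is positive.
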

To see why this holds, assume, w.l.o.g., that $\max\{\delta(S), \delta(T)\}=\delta(S)$. Then, 
$\NCD_\delta(S, T) = \frac{\delta(S,T) - \delta(T)}{\delta(S)} \ge \frac{\delta(T) - \delta(T)}{\delta(S)} = 0$ and
$\NCD_\delta(S, T) 
\le \frac{\delta(S) + \delta(T) - \delta(T)}{\delta(S)} = 1$.
Ming et al.~\cite{MingSimilarityMetric2004} state that common compressors yield a normalized compression distance between $0$ and $1+\epsilon$, where the $\epsilon$ is due to ``imperfections'' of the compression algorithm. Above we proved that in the case of the normalized substring complexity $\delta$, the corresponding $\epsilon$ is equal to $0$. 

We conclude by showing that a multiplicative approximation of $\delta$ can be used to obtain an additive approximation of the Normalized Compression Distance $\NCD_\delta$.
\begin{restatable}{lemma}{NCDapprox}\label{lem:NCD approx}
    Let $\eps \in (0,1)$, $\eps':=\eps/5$ and let $S$ and $T$ be two strings. Assume that $\tilde \delta(S)$, $\tilde \delta(T)$, and $\tilde \delta(S, T)$ are approximations of $\delta$ in the sense that $\tilde \delta(S)\in [(1\pm\eps')\delta(S)]$, $\tilde \delta(T)\in [(1 \pm \eps')\delta(T)]$, as well as $\tilde \delta(S, T)\in [(1\pm\eps')\delta(S, T)]$. Then 
    \[
        \NCD_{\tilde \delta}(S, T) \in [\NCD_{\delta}(S, T) \pm \epsilon].
    \]
\end{restatable}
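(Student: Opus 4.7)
My plan is to separately estimate the error in the numerator and in the denominator of $\NCD_{\tilde\delta}(S,T)$ and then combine them via a standard multiplicative-to-additive conversion. I would first observe that if $\tilde\delta(S)\in[(1\pm\epsilon')\delta(S)]$ and $\tilde\delta(T)\in[(1\pm\epsilon')\delta(T)]$, then the induced $\min$ and $\max$ inherit the same multiplicative guarantee, i.e.\ $\max\{\tilde\delta(S),\tilde\delta(T)\}\in[(1\pm\epsilon')\max\{\delta(S),\delta(T)\}]$, and analogously for the $\min$. To keep the notation compact I would write $D:=\max\{\delta(S),\delta(T)\}$, $m:=\min\{\delta(S),\delta(T)\}$, $\Delta:=\delta(S,T)$, and $\tilde D,\tilde m,\tilde\Delta$ for their approximations, so that $\tilde D\in[(1\pm\epsilon')D]$ and $\tilde m\in[(1\pm\epsilon')m]$, while by assumption $\tilde\Delta\in[(1\pm\epsilon')\Delta]$.

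The critical step is to bound the numerator deviation. By the triangle inequality $|(\tilde\Delta-\tilde m)-(\Delta-m)|\le \epsilon'(\Delta+m)$, and here I would invoke Lemma~\ref{lemma: subadditivity} to get $\Delta\le \delta(S)+\delta(T)=D+m\le 2D$, hence $\Delta+m\le 3D$, so the numerator deviation is at most $3\epsilon'D$. This is where sub-additivity plays its part: without it, $\Delta$ could in principle exceed $D$ by an unbounded factor and the additive conversion would fail.

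I would then write the target difference as
\[
    \NCD_{\tilde\delta}(S,T)-\NCD_{\delta}(S,T) \;=\; \frac{(\tilde\Delta-\tilde m)-(\Delta-m)}{\tilde D} \;+\; \frac{\Delta-m}{D}\cdot\frac{D-\tilde D}{\tilde D}.
\]
The first summand is bounded by $3\epsilon' D/((1-\epsilon')D)=3\epsilon'/(1-\epsilon')$ using the numerator estimate. For the second, I would invoke Corollary~\ref{cor: ncd range} to conclude $(\Delta-m)/D=\NCD_\delta(S,T)\le 1$, which together with $|D-\tilde D|\le\epsilon' D$ bounds this summand by $\epsilon'/(1-\epsilon')$. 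Summing, the total absolute error is at most $4\epsilon'/(1-\epsilon')$, and plugging $\epsilon'=\epsilon/5$ (so that $1-\epsilon'\ge 4/5$ as $\epsilon\le 1$) yields exactly the bound $\epsilon$.

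The main obstacle is really just careful bookkeeping; the conceptual content is entirely delegated to the two earlier results that ensure that both the numerator $\Delta-m$ and the denominator $D$ of $\NCD_\delta$ live on the same scale. The specific constant $5$ in the hypothesis $\epsilon'=\epsilon/5$ is not special and is dictated by the ``$4\epsilon'/(1-\epsilon')$'' estimate above; any small enough constant would do, with a correspondingly looser conclusion.
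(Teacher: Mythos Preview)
Your proof is correct and follows essentially the same route as the paper: both arguments rely on Lemma~\ref{lemma: subadditivity} to get $\Delta+m\le 3D$ and on Corollary~\ref{cor: ncd range} to bound $(\Delta-m)/D\le 1$, arriving at the same final estimate $4\eps'/(1-\eps')\le\eps$. The only cosmetic difference is that the paper treats the upper and lower deviations separately by plugging in worst-case values of $\tilde\delta$, whereas you handle both directions at once via the algebraic identity splitting $\NCD_{\tilde\delta}-\NCD_\delta$ into a numerator term and a denominator term; the substance is identical.
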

We prove this lemma by using the facts that $\tilde \delta$ is a multiplicative approximation of $\delta$, that $\delta$ is sub-additive (see Lemma~\ref{lemma: subadditivity}), and that $\NCD_\delta\in [0, 1]$ (see Corollary~\ref{cor: ncd range}).

\section{A data sketch for estimating $\delta$}\label{sec:sketch}

We introduce our data sketch, then prove that it yields a good approximation of $\delta$.

\begin{definition}[Sketch for $\delta$]\label{def:sketch}
Let $S$ be a string, $A:=\{\lceil \alpha^i \rceil : i\in [\lfloor \log_\alpha n \rfloor ]\}$ be a set of sampled lengths for some real number (sample rate) $\alpha > 1$, and $CD_k = CD(\rho(D_{k}(S)))$, where $CD$ is the count-distinct sketch described in Section~\ref{sec:preliminaries} and $\rho$ is Rabin's hash function. Our data sketch is defined as
$
\kappa(S) = \langle CD_k\ :\ k \in A \rangle.
$
\end{definition}

We define 
$\kappa(S).estimate() = \max\{ CD_k.estimate()/k\ :\ k \in A \}$. When extending the stream $S$ with a new character $a$, yielding string $Sa$, the sketch is updated by calling $CD_k.add(\rho(S[|S|-k+2,|S|]a))$ for all $k\in A$. 
We denote this operation by $\kappa(S).extend(a)$.
Note that, if constant-time random access is available on $S$ and if $\sigma^{k-1}\mod q$ has been pre-computed for all $k\in A$ (in $O(\epsilon^{-1}\log^2 n)$ time), $\rho(S[|S|-k+2,|S|]a)$ can be computed in constant time from $\rho(S[|S|-k+1,|S|])$; see \cite{rabin1981fingerprinting}.
Finally, $\kappa(S_1).merge(\kappa(S_2))$ returns the sketch $\kappa(\{S_1,S_2\}) = \langle CD'_k\ :\ k \in A \rangle$, where $CD'_k = CD^1_k.merge(CD^2_k)$ and $CD^i_k$ is the count-distinct sketch for the (fingerprints of the) length-$k$ substrings of $S_i$, $i\in \{1,2\}$. 
Operation $extend(a)$ is not defined when the sketch represents a set of strings; this is not an issue, since we will call $merge$ only to estimate $\delta(S_1,S_2)$ and $NCD_\delta(S_1,S_2)$.

With the next theorem we show that $\kappa(S).estimate()$ returns a multiplicative $(1\pm\epsilon)$-approximation of $\delta(S)$ (analogous for $\kappa(\{S_1, S_2\}).estimate()$).

\begin{restatable}{lemma}{approxdelta}\label{lemma:approx delta}
    Let $S$ be a string of length $n$. Let $\eps > 0$, $\eps' = \eps/4$, and $\alpha = 1 + \eps'$. Assume that $\tilde d_k\in [(1\pm \eps') d_k(S)]$ for all $k\in A:=\{\lceil \alpha^i \rceil : i\in [\lfloor \log_\alpha n \rfloor ]\}$, then 
    $
        \tilde \delta 
        := \max \{ \tilde d_k/k : k\in A\},
    $
    satisfies 
    $\tilde \delta \in [(1 \pm \eps)\delta]$.
\end{restatable}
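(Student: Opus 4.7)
The plan is to establish the two-sided approximation by proving $\tilde\delta \le (1+\eps)\delta$ and $\tilde\delta \ge (1-\eps)\delta$ separately.

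The upper bound is immediate: for every $k\in A$, the hypothesis $\tilde d_k \le (1+\eps')d_k(S)$ together with the defining inequality $d_k(S)\le k\delta$ yields $\tilde d_k/k \le (1+\eps')\delta$. Taking the maximum over $k\in A$ gives $\tilde\delta \le (1+\eps')\delta \le (1+\eps)\delta$.

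For the lower bound, let $k^*$ realize $\delta = d_{k^*}(S)/k^*$, and let $k\in A$ be the smallest sampled length with $k \ge k^*$. Since consecutive elements of $A$ are within a factor $\alpha = 1+\eps'$ (up to integer rounding), we have $k \le \alpha k^* + O(1)$. The crux of the argument is the combinatorial inequality
\[
    d_k(S) \;\ge\; d_{k^*}(S) - (k - k^*).
\]
I would prove this by constructing an injection from the subset of length-$k^*$ substrings $w$ that have at least one occurrence at some position $i \le n - k + 1$ into $D_k(S)$: send $w$ to $S[i..i+k-1]$. Distinct $w$'s yield length-$k$ substrings with distinct length-$k^*$ prefixes, so the map is injective. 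The length-$k^*$ substrings excluded from its domain are those whose every occurrence lies among the last $k - k^*$ starting positions $n-k+2,\ldots,n-k^*+1$, so at most $k - k^*$ substrings are excluded.

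Substituting $d_{k^*} = k^*\delta$ and $k - k^* \le \eps' k^* + O(1)$ into this inequality, and using $\delta \ge 1$ to absorb the additive slack into the multiplicative error, one obtains $d_k/k \ge (1 - c\eps')\delta$ for an absolute constant $c$. Combining with the hypothesis $\tilde d_k \ge (1-\eps')d_k$ yields $\tilde d_k/k \ge (1-\eps)\delta$ for the choice $\eps' = \eps/4$ (the denominator chosen so that all $O(\eps')$ losses sum to at most $\eps$), and the lower bound follows since $\tilde\delta \ge \tilde d_k/k$.

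The main obstacle is the combinatorial inequality $d_k \ge d_{k^*} - (k-k^*)$; once it is in hand, the remainder is routine arithmetic. A secondary delicate point is the boundary case of very small $k^*$ (in particular $k^* = 1$), which requires either $1 \in A$ or an ad hoc argument ensuring that the ceiling in $\lceil \alpha^i \rceil$ does not inflate the smallest sampled length by more than a $(1+\eps')$ factor relative to $k^*$.
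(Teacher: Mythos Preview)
Your approach is the paper's: identical upper bound, the same combinatorial inequality $d_k \ge d_{k^*}-(k-k^*)$ (the paper states it as $d_{j+1}\ge d_j-1$ and iterates), and the same strategy of rounding the maximizer up to the nearest sampled length. The paper additionally verifies that such a sampled length exists by invoking $\hat k\le n/2$ (Observation~\ref{obs: hat k upper bound}), which you should also do.

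Where the paper is sharper is precisely your flagged ``secondary delicate point.'' Writing $k\le \alpha k^* + O(1)$ and appealing to $\delta\ge 1$ does not absorb the additive $+1$ from the ceiling when $k^*$ is small: the residual $1/k^*$ term is not $O(\eps')$, so your claimed bound $d_k/k\ge(1-c\eps')\delta$ with an absolute constant $c$ does not follow in that regime. The paper resolves this with a case split. If the two sampled values bracketing $\hat k$ are consecutive integers, then $\hat k\in A$ and there is nothing to prove. Otherwise, from $\lceil\alpha^i\rceil\ge\lceil\alpha^{i-1}\rceil+2$ one deduces $\alpha^{i-1}\eps'\ge 1$, hence $\hat k\ge 1/\eps'$, which converts the $+1$ into another $\eps'\hat k$ term. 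The paper also uses the standing assumption $\delta\ge 2$ (Observation~\ref{obs: lower 2}, valid once $S$ is not unary) rather than $\delta\ge 1$, in order to replace $\beta/d_{\hat k}$ by $\beta/(2\hat k)$; with these two refinements the arithmetic closes exactly at $\eps'=\eps/4$.
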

We show this theorem by quantifying the impact of two types of errors on the quantity $\delta$. These two types are (1) the error obtained when approximating the values $d_k$ by $\tilde d_k$ and (2) the error due to the restriction of the string's offsets $[n]$ to the set $A$. The error of type~(1) directly implies an error of the same magnitude $(1\pm \eps')$ on $\delta$. We note that this error actually itself has two sources, namely (1.1) errors due to collisions when computing Rabin's fingerprints and (1.2) errors due to the count-distinct sketch when applied to the fingerprints. Both of these errors are accounted for in the assumption $\tilde d_k\in [(1\pm \eps') d_k(S)]$. The error of type~(2) instead is more subtle to analyse -- the main observation here is that $d_{j + 1}\ge d_j - 1$ for every $j$, as every distinct length-$j$ substring other than possibly $S[n - j + 1..n]$ gives at least one distinct length-$(j + 1)$ substring. Now assume that $i\in [n]\setminus A$ and that $a\in A$ is the minimum element of $A$ larger than $i$. Then applying the previous observation iteratively yields $d_{a}\ge d_{i} - \beta$, where $\beta = a - i$. Hence, we can quantify how much $\delta$ gets ``perturbed'' by restricting to the subset $A$ of the string's offsets $[n]$.

From Lemmas \ref{lem:NCD approx} and \ref{lemma:approx delta}, the sketch of Definition \ref{def:sketch} yields a multiplicative $(1\pm \epsilon)$-approximation of $\delta$ and an additive $\epsilon$-approximation of $\NCD_\delta$
if $CD$ is the count-distinct sketch of \cite{KaneOptimalCountDistinct2010} with error rate $\epsilon/20$, and the set $A$ is built with sample rate $\alpha = 1 + \epsilon/20$. 
From \cite{KaneOptimalCountDistinct2010} and since $|A| \in \Theta(\epsilon^{-1}\log n)$, our data sketch uses $\Theta(\epsilon^{-3}\log n + \epsilon^{-1}\log^2 n)$ words of space and supports all operations in time $O(\epsilon^{-1}\log^2 n)$.

Using repeatedly operation $extend$ on our data sketch we immediately obtain:

\begin{theorem}\label{thm: main}
    For any string $S$ of length $n$ supporting random access in time at most $O(\log n),$ and any approximation rate $\epsilon>0$, 
    we can compute a  multiplicative $(1\pm \epsilon)$-approximation of $\delta(S)$ in $O(\epsilon^{-1}n\log^2 n)$ time using  $\Theta(\epsilon^{-3}\log n + \epsilon^{-1}\log^2 n)$ words of working space on top of the string. The result is correct with high probability. 
\end{theorem}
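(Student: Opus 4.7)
The plan is to reduce the theorem directly to the sketch machinery developed in Section~\ref{sec:sketch}. Concretely, I would set $\epsilon' = \epsilon/20$, sample rate $\alpha = 1 + \epsilon'$, and instantiate $\kappa(S)$ as in Definition~\ref{def:sketch} using the count-distinct sketch of Kane et al.\ with error rate $\epsilon'$, so that Lemma~\ref{lemma:approx delta} guarantees that $\kappa(S).estimate()$ returns a multiplicative $(1\pm\epsilon)$-approximation of $\delta(S)$ with high probability. The algorithm then simply initializes an empty sketch and performs $n$ calls to $\kappa(\cdot).extend$, one per character of $S$ read left-to-right, followed by a single $estimate$ call at the end.

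Next I would justify the per-character update cost. During a preprocessing step of time $O(\epsilon^{-1}\log^2 n)$, I compute and store $\sigma^{k-1} \bmod q$ for every $k \in A$; since $|A| = \Theta(\epsilon^{-1}\log n)$, this fits inside the claimed space. For each sampled length $k \in A$, I additionally maintain the current rolling fingerprint $\rho(S[i-k+1..i])$, where $i$ is the index of the most recently processed character. When processing $S[i+1]$, each rolling fingerprint is updated via the standard identity
\[
\rho(S[i-k+2..i+1]) \;=\; \sigma \cdot \bigl(\rho(S[i-k+1..i]) - S[i-k+1]\cdot \sigma^{k-1}\bigr) + S[i+1] \pmod q,
\]
which uses one random access to $S[i-k+1]$ (costing $O(\log n)$ by assumption), the precomputed power $\sigma^{k-1}\bmod q$, and constant further arithmetic. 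The updated fingerprint is then passed to $CD_k.add$, which runs in $O(\log n)$ time. Summed over all $k \in A$, this yields $O(\epsilon^{-1}\log^2 n)$ time per character, and thus $O(\epsilon^{-1} n\log^2 n)$ time overall.

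The working-space bound is inherited directly: the $|A| = \Theta(\epsilon^{-1}\log n)$ count-distinct sketches together occupy $\Theta(\epsilon^{-3}\log n + \epsilon^{-1}\log^2 n)$ words by the bounds of Kane et al., and the rolling fingerprints plus precomputed powers add only $O(\epsilon^{-1}\log n)$ words, absorbed into the stated total. Correctness with high probability follows from Lemma~\ref{lemma:approx delta} together with a union bound over the $O(\epsilon^{-1}\log n)$ lengths in $A$ for the two independent failure sources (Rabin fingerprint collisions across the $O(n)$ substrings considered, and count-distinct estimation error); both failure probabilities can be driven to $n^{-c}$ by standard choice of the prime $q = n^{\Theta(1)}$ and the constant factor in the count-distinct sketches.

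There is no serious obstacle: the only mildly delicate point is verifying that the rolling-hash update can still be done in $O(\log n)$ amortized time even when random access is not constant but takes $O(\log n)$, which is exactly what I handle above by isolating a single access to the outgoing character per sampled length. Everything else is an immediate consequence of the sketch's update-time guarantee, the union bound, and Lemma~\ref{lemma:approx delta}.
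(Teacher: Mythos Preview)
Your proposal is correct and follows exactly the paper's approach: the paper states just before the theorem that the sketch with sample rate $\alpha=1+\epsilon/20$ and count-distinct error rate $\epsilon/20$ gives the claimed space and per-operation time bounds, and then says ``Using repeatedly operation $extend$ on our data sketch we immediately obtain'' the theorem. Your write-up simply unpacks this one line, making explicit the rolling-hash update (with one $O(\log n)$ random access per sampled length), the preprocessing of powers, and the union bound---all of which the paper leaves implicit in the surrounding discussion.
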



\section{Streaming algorithm}\label{sec:streaming}

We now show how to compute the sketch of Definition \ref{def:sketch} in $O(\sqrt n \log n)$ words of working space (on top of the sketch) with one pass over the streamed input string.

Let $S$ denote the current stream, and $S^R$ be the reversed stream.
We assume that an upper-bound $n$ to the maximum stream length is known before the algorithm starts.
Let $r$ be the number of equal-letter runs in the Burrows-Wheeler transform of $S^R$. By \cite{KempaK20} and by the fact that $\delta$ is invariant under string reversals, it holds $r \leq 8\delta\log^2n$.
Our streaming algorithm works as follows. We keep a sliding window $S[|S|-K+1,|S|]$ of the last $K$ stream characters, for some parameter $K$ to be determined later, and at the same time we keep a dynamic run-length BWT (RLBWT) of $S^R$
that we update by appending the stream's characters using Lemma \ref{lem:PP-RLBWT}.
Before the algorithm starts, in $O(\epsilon^{-1}\log^2 n)$ time we compute $\sigma^{k-1} \mod q$ for all the $|A| \in O(\epsilon^{-1}\log n)$ sampled substring lengths $k\in A$ in our sketch, using fast exponentiation.

Let $k\in A$ be one of the sampled string lengths in our sketch, and let $a$ be a new character arriving on the stream (so that the new stream is $Sa$). In order to update our sketch, we need to compute the fingerprint of the last $k$ stream's characters: $\rho(S[|S|-k+2,|S|]a)$. 
At any stage of the algorithm, we keep the Rabin's fingerprint $\rho(S[2,|S|])$ of the whole stream, excluding character $S[1]=\$$. 
If $|Sa|=k+1$, then $\rho(S[|S|-k+2,|S|]a)$ is equal to the Rabin's fingerprint of the whole stream. 
Otherwise, if $|Sa|>k+1$ then $\rho(S[|S|-k+1,|S|])$ has already been computed in the previous steps and  we can use the formula
$\rho(S[|S|-k+2,|S|]a) = (\rho(S[|S|-k+1,|S|]) - S[|S|-k+1]\cdot \sigma^{k-1})\cdot \sigma + a \mod q$. As a result, updating the fingerprint reduces to extracting character $S[|S|-k+1]$. 
We use the window $S[|S|-K+1,|S|]$ to extract $S[|S|-k+1]$ for any $k\leq K$, and the RLBWT to 
extract $S[|S|-k+1]$ for any $k > K$ 
using a bookmarking technique that we sketch in Figure \ref{fig:example-bookmarking} and we describe in full detail in the full version \cite{becker2023sketching}. This allows us to update the Rabin's fingerprints for all sampled substring lengths $k$ and thus to implement operation $extend(a)$.

We now describe the policy we employ to keep space usage under control.
Let $r'$ be the number of equal-letter runs in the BWT obtained by ignoring (removing) character $\$$. It is easy to see that (i) $r-2 \leq r' \leq r$ and (ii) $r'$ is non-decreasing upon appending characters at the end of the stream.
As soon as $r' \geq 8n(\log^2n)/K$, we discard the RLBWT and keep only the sliding window for the rest of the stream. As a consequence, from this point on we are only able to extract (fingerprints of) length-$k$ substrings with $k\leq K$. However, we show that this is enough: if we discard the RLBWT, then it means that $\delta \geq \frac{r}{8\log^2n} \geq \frac{r'}{8\log^2n} \geq n/K$. Let $\hat k = \argmax_{k\ge 1}\{d_k/k\}$. Then, $\hat k = d_{\hat k}/\delta \leq n/\delta \leq K$ so to compute $\delta$ on the rest of the stream we can focus only on the length-$k$ substrings with $k\leq K$.

The sliding window $S[|S|-K+1,|S|]$ uses $K$ words of space. We discard the RLBWT when $r' \geq 8n(\log^2n)/K$, so (since $r\leq r'+2$) this structure always uses at most $O(r) \subseteq O(n(\log^2n)/K)$ words.
As a consequence, in total we use $O(K + n(\log^2n)/K)$ words of space, which is optimized asymptotically when $K=\sqrt n \log n$; then, our algorithm uses at most $O(\sqrt n \log n)$ words of space.

\begin{figure}
    \centering
\begin{minipage}[c]{0.49\textwidth}
\centering
\begin{tabular}{l|c}
     suffixes of $S^R$ & BWT\\\hline
     \$ & \$
\end{tabular}
\end{minipage}
\begin{minipage}[c]{0.49\textwidth}
\centering
\begin{tabular}{l|c}
     suffixes of $S^R$ & BWT\\\hline
     \$ & b\\
     b\$ & \$
\end{tabular}
\end{minipage}

\vspace{10pt}

\begin{minipage}[c]{0.49\textwidth}
\centering
\begin{tabular}{l|cl}
     suffixes of $S^R$ & BWT\\\hline
     \$ & b & $j=1$\\
     ab\$ & \$ &\\
     b\$ & a &\\
\end{tabular}
\end{minipage}
\begin{minipage}[c]{0.49\textwidth}
\centering
\begin{tabular}{l|cl}
     suffixes of $S^R$ & BWT\\\hline
     \$ & b&\\
     aab\$ & \$ &\\
     ab\$ & a &\\
     b\$ & a& $j=4$\\
\end{tabular}
\end{minipage}
\caption{\footnotesize Example showing how the BWT (of the reversed stream) is updated upon character right-extensions of the stream, and how the bookmark $j$ corresponding to window length $k=2$ is initialized and updated. \textbf{Top left}: empty stream ($S = \$$). \textbf{Top right}: a new character $b$ arrives on the stream ($S = \$b$): in the BWT, \$ is replaced by $b$ and a new \$ is inserted in the position corresponding to the lexicographic rank $i=2$ of the new suffix $b\$$. 
Position $i$ is computed in $O(\log|S|)$ time using the algorithm described in \cite[Thm. 2]{policriti2018lz77}.
\textbf{Bottom left}: a new character $a$ arrives on the stream ($S = \$ba$): in the BWT, \$ is replaced by $a$ and a new \$ is inserted in the position corresponding to the lexicographic rank $i=2$ of the new suffix $ab\$$. Since the stream length is equal to $k+1=3$, we initialize the bookmark $j \leftarrow BWT.LF(i) = BWT.LF(2) = 1$. Note that $BWT[j]=b$ indeed contains character $S[|S|-k+1] = b$. \textbf{Bottom right}: a new character $a$ arrives on the stream ($S = \$baa$): in the BWT, \$ is replaced by $a$ and a new \$ is inserted in the position corresponding to the lexicographic rank $i=2$ of the new suffix $aab\$$. Since $1= j < i = 2$ (\$ is inserted after position $j$), $j=1$ is not modified (otherwise, it would have been incremented by 1). Finally, we update $j$ by advancing it  by one position in the text: $j \leftarrow BWT.LF(j) = BWT.LF(1) = 4$. Note that $BWT[j]=a$ indeed contains character $S[|S|-k+1] = a$. Importantly, the data structure of \cite[Thm. 2]{policriti2018lz77} uses always a space proportional to the number $r$ of equal-letter runs of the BWT.}
\vspace{-3mm}
\label{fig:example-bookmarking}
\end{figure}

We keep one bookmark (a position in the BWT) for every sampled length $k\in A$, so our bookmarking technique does not affect the asymptotic working space if $\epsilon \geq n^{-1/2}$ (i.e. $|A| \leq \sqrt n \log n$). Updating each bookmark and extracting $S[|S|-k+1]$ from the RLBWT take $O(\log n)$ time by Lemma \ref{lem:PP-RLBWT}. This running time is absorbed by operation $merge()$ on the count-distinct sketches, see Section \ref{sec:sketch}. We obtain:

\begin{theorem}
    Given an upper-bound $n$ to the stream's length, we can compute the sketch of Definition \ref{def:sketch} in $O(\sqrt n\log n)$ words of working space and $O(\epsilon^{-1}\log^2 n)$ worst-case delay per stream character, for any approximation factor  $\epsilon \geq n^{-1/2}$.
\end{theorem}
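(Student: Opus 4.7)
The plan is to set $K = \sqrt n \log n$ (which balances the two summands $K$ and $n(\log^2 n)/K$ in the space analysis) and then read off the claimed bounds from the four ingredients described in this section: the sliding window of the last $K$ stream characters, the dynamic run-length BWT of $S^R$ with the discard rule ``$r' \geq 8n(\log^2 n)/K$'', the BWT bookmarks, and the count-distinct sketches $CD_k$ for $k \in A$, together with the running fingerprint $\rho(S[2,|S|])$ maintained in $O(1)$ per character. All space is measured on top of the sketch itself, as stated at the beginning of Section \ref{sec:streaming}.

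For the space bound, the window occupies $K = \sqrt n \log n$ words. While the RLBWT is maintained, the invariant $r' < 8n(\log^2 n)/K$ together with $r \leq r' + 2$ yields $r = O(\sqrt n \log n)$, so Lemma \ref{lem:PP-RLBWT} stores it in $O(\sqrt n \log n)$ words; after discarding, its contribution drops to zero. The precomputed powers $\sigma^{k-1} \bmod q$ and the bookmarks (one per sampled length) each take $O(1)$ words, for a total of $O(|A|) = O(\epsilon^{-1}\log n) = O(\sqrt n \log n)$ under the hypothesis $\epsilon \geq n^{-1/2}$. Summing the three contributions gives the claimed $O(\sqrt n \log n)$ bound.

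The correctness of the discarding rule requires a separate argument. At the moment of triggering, $r \geq r' \geq 8n(\log^2 n)/K$, and the bound $r \leq 8\delta \log^2 n$ of \cite{KempaK20} (applied to $S^R$ using the reversal-invariance of $\delta$) gives $\delta(S) \geq n/K$. Since $\delta$ is monotone under right-extensions by Lemma \ref{lemma: subadditivity}, this inequality persists on every subsequent prefix $S'$, so $\hat k(S') = d_{\hat k(S')}(S')/\delta(S') \leq |S'|/\delta(S') \leq n/(n/K) = K$. Hence only length-$k$ substrings with $k \leq K$ are relevant to the final estimate, and all of them remain accessible from the window.

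For the delay, appending a character updates the window and the running fingerprint in $O(1)$, and the RLBWT (when present) in $O(\log n)$. For each of the $|A| = O(\epsilon^{-1}\log n)$ sampled lengths $k$, the Rabin recurrence needs $S[|S|-k+1]$, obtained in $O(1)$ from the window if $k \leq K$ and in $O(\log n)$ from the RLBWT bookmark otherwise; the bookmark itself is advanced by at most one LF step (plus one index shift if the newly inserted \$ lex-ranks above it) in $O(\log n)$ by Lemma \ref{lem:PP-RLBWT}, and the subsequent $CD_k.add$ call costs $O(\log n)$. Summing over $k \in A$ gives $O(\epsilon^{-1}\log^2 n)$ worst-case delay. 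The only non-mechanical step is the correctness of the discarding rule across future characters, which relies crucially on the monotonicity of $\delta$ and on the fact that $n$ is a fixed a-priori upper bound rather than the current stream length.
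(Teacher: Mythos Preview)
Your proof is correct and follows essentially the same approach as the paper: the same choice $K=\sqrt n\log n$, the same discard rule $r'\ge 8n(\log^2 n)/K$, the same bookmarking mechanism, and the same per-character cost breakdown. One minor point: the right-extension monotonicity you invoke is the concatenation monotonicity of $\delta$ (elementary from $D_k(S)\subseteq D_k(Sa)$, and noted in the paper via~\cite{KociumakaNP23}), not Lemma~\ref{lemma: subadditivity}, which concerns the two-string variant $\delta(S,T)$; your making this monotonicity explicit across future prefixes actually fills in a step the paper leaves implicit.
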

\vspace{-7mm}

\section{Implementation and experiments}\label{sec:exp}
\vspace{-1mm}

We implemented a parallel version of our streaming algorithm in {\tt C++}.\footnote{ \url{https://github.com/regindex/substring-complexity}} 
We ran experiments on a server with Intel(R) Xeon(R) W-2245 CPU @ 3.90GHz with 16 threads and 128GB of RAM running Ubuntu 18.04 LTS 64-bit.
Our complete experimental results are reported in \cite{becker2023sketching}. 
We used the repetitive real Pizza\&Chilli dataset (P\&C)\footnote{\url{https://pizzachili.dcc.uchile.cl/repcorpus/real/}}, large Canterbury corpus\footnote{\url{http://corpus.canterbury.ac.nz/resources/large.tar.gz}}, and datasets from AF Project\footnote{\url{https://afproject.org}}.
We computed the relative error of our approximation $\tilde \delta$ with respect to $\delta$ for different sampling densities (i.e. parameter $\alpha$ of Definition \ref{def:sketch}). 
With the sparsest (less precise) sampling scheme (option {\tt -p 1}), $\tilde\delta$ always differed from $\delta$ by up to 5\% and the average throughput was of 174 MB per minute using up to 16 threads (option {\tt -t 0}). For efficiency reasons, the RLBWT is disabled by default: in practice this does not affect precision, since $\hat k=\argmax_k d_k/k$ was always extremely small ($\hat k\le 100$ in all datasets), meaning that the RLBWT is never required. We also computed ${\tilde \delta}$ on a big dataset of 189GB long reads of Rana Muscosa\footnote{\url{https://trace.ncbi.nlm.nih.gov/Traces/?view=run_browser&acc=SRR11606868}}. Our software finished the computation in 15:31 hours with a throughput of 203MB per minute using only about 5MB of internal memory.

\noindent\textbf{Experiments on repetitiveness measures.} We studied the effectiveness of ${\tilde \delta}$ as a repetitive measure. We compared it to exact $\delta$, to the number of runs of the BWT $r$, to the number of phrases of the LZ77 parse $z$, and to the output of two popular compressors, {\tt xz} and {\tt 7z}. For each dataset in the repetitive P\&C corpus, we computed these five measures for prefixes of increasing length. We observe that ${\tilde \delta}$ not only follows closely the values of $\delta$, but it also mirrors the trend of the other four measures. This  suggests experimentally that ${\tilde \delta}$ computed by our streaming algorithm is a good indicator of repetitiveness and compressibility.

\begin{figure*}[ht!]
    \centering
 	\includegraphics[width=0.44\textwidth, trim={5.5mm 5.5mm 5.0mm 5.5mm}, clip]{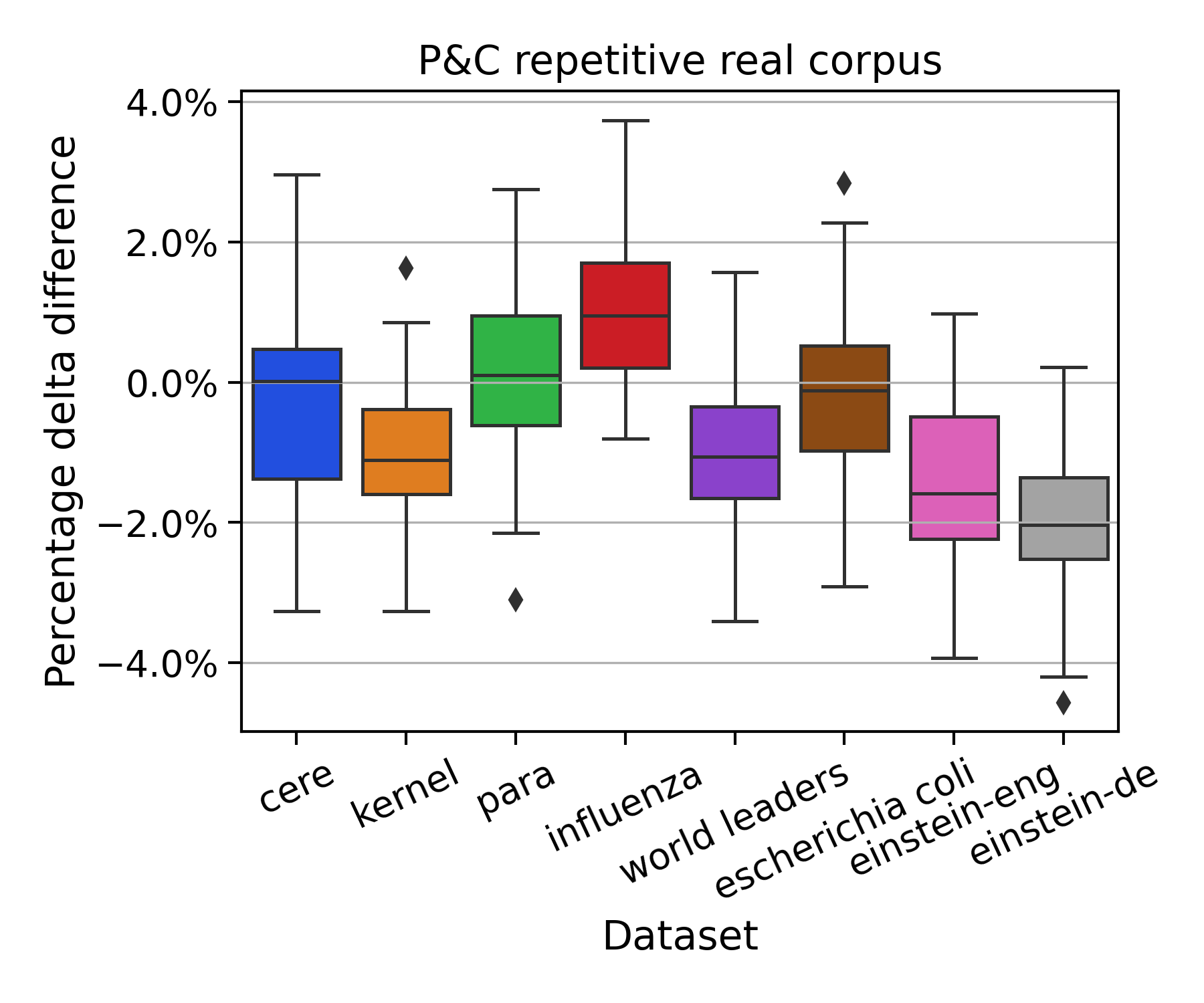}
    \hspace{3mm}
    \includegraphics[width=0.44\textwidth, trim={5.5mm 5.5mm 5.0mm 5.5mm}, clip]{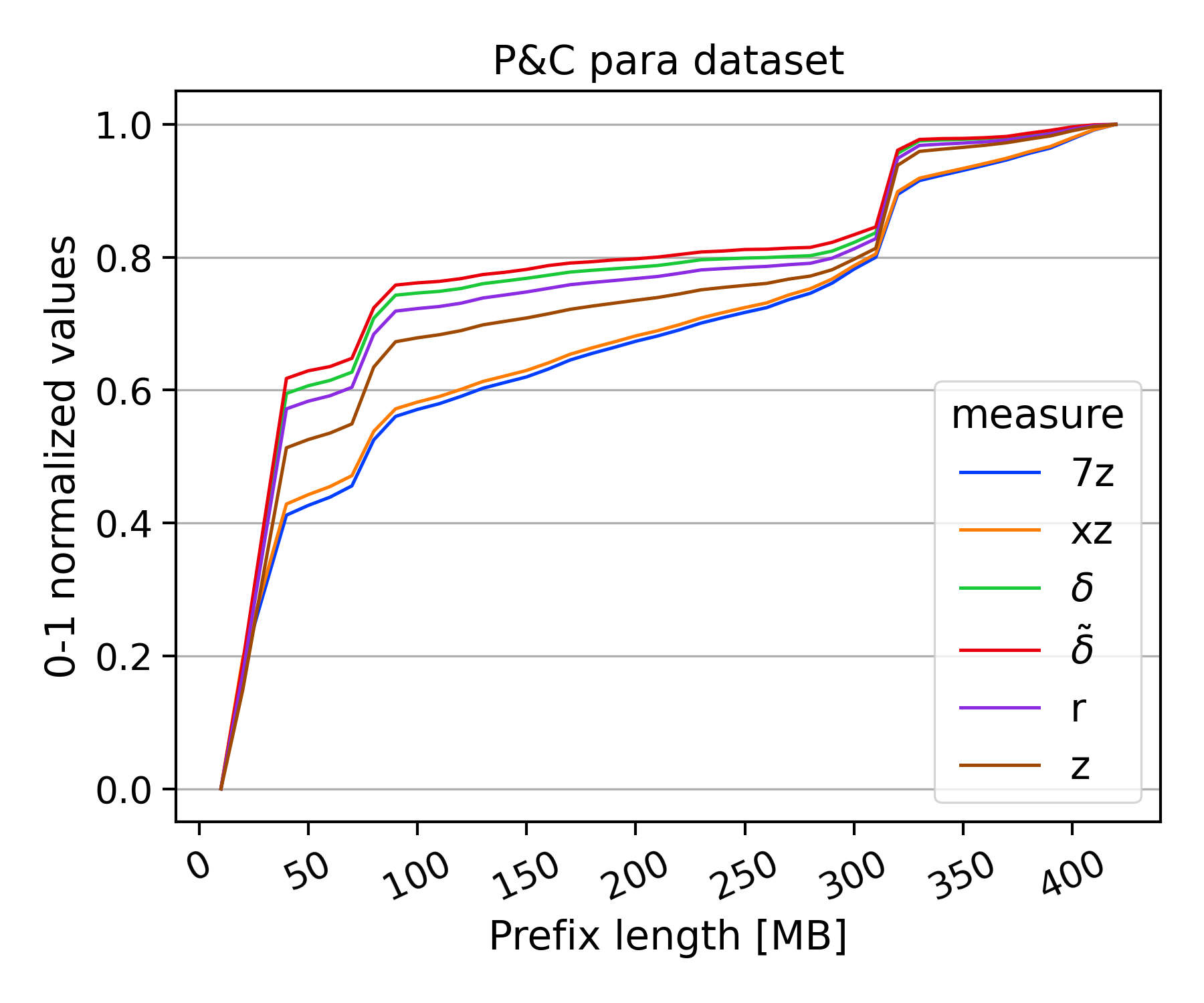}
    \caption{
    From left to right: $\tilde \delta$ error distribution on P\&C repetitive corpus, lineplot showing five repetitiveness measures (normalized to $[0,1]$) computed on increasing prefixes of \textit{para}. 
    }
    \label{fig:exp1}
    \vspace{-3mm}
\end{figure*}

\noindent\textbf{Experiments on phylogeny reconstruction.} We verified that NCD based on the compression software $\mathtt{xz}$, on $\delta$, and on $\tilde\delta$ yield similar phylogenetic trees with the \texttt{Gene-trees} dataset from AF Project; the average normalized Robinson-Foulds distances ranged from 0.1 to 0.3, indicating that the reconstructed trees were very similar. We also measured the running time to compute all-pair NCDs on 29 sequences of average length $\sim$81k. This process took only 3 minutes for $\tilde\delta$ and 24 minutes for the exact $\delta$, while for $\mathtt{xz}$ it required 42 minutes.




\Section{References}\vspace{-0.4cm}
\bibliographystyle{IEEEbib}
\bibliography{refs}

\newpage
\appendix

\section{Missing Proofs}
We start with the following two easy observations that we use in our proofs later on.
\begin{observation}\label{obs: lower 2}
    If $S\neq a^n$ for $a\in \Sigma$, then $\delta(S) \ge d_1/1 \ge 2/1 = 2$.
\end{observation}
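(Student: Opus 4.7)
The plan is to unfold the definition of $\delta$ at the sampling length $k=1$ and then count distinct characters. By definition, $\delta(S) = \max_{k\ge 1}\{d_k(S)/k\}$, so taking the particular term $k=1$ immediately yields $\delta(S) \ge d_1(S)/1 = d_1(S)$. This gives the first inequality of the chain ``for free'' from the definition, so no work is needed here.

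The remaining step is to lower-bound $d_1(S)$. First I would observe that $D_1(S) = \{S[i..i]\ :\ i \in [n]\}$ is precisely the set of distinct alphabet characters occurring in $S$, so $d_1(S)$ is exactly the number of distinct characters in $S$. The hypothesis $S \neq a^n$ for every $a \in \Sigma$ means exactly that $S$ is not constant, i.e.\ at least two of its positions hold different characters. Therefore $|D_1(S)| \ge 2$, giving $d_1(S) \ge 2$. Chaining this with the previous inequality yields $\delta(S) \ge d_1(S)/1 \ge 2$, as claimed.

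There is no real obstacle in this proof; it is a direct consequence of the definitions. The only thing to be careful about is to explicitly point out that $d_1$ equals the number of distinct characters (rather than, say, the length of $S$), since this is where the hypothesis $S \neq a^n$ is used.
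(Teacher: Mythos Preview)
Your proposal is correct and follows essentially the same approach as the paper's own justification: both simply observe that $\delta(S)\ge d_1/1$ by definition and that the hypothesis forces at least two distinct characters to appear in $S$, whence $d_1\ge 2$.
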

Note that if at least two distinct letters appear, it follows that $d_1 > 1$ and thus $\delta \ge  d_1/1 \ge 2$.
Since it is easy to recognize the case $S=a^n$ for some $a\in \Sigma$ in constant space and constant delay per character, from now on we assume w.l.o.g. that $\delta\ge 2$. 

We continue with the following simple observation that we will use in the proof of Lemma~\ref{lemma:approx delta}.
\begin{observation}\label{obs: hat k upper bound}
    It holds that $\hat k = \argmax_{k\ge 1}\{d_k(S)/k\}\le n/2$.
\end{observation}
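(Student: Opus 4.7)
The plan is to bound $d_k$ from above using the trivial fact that a string of length $n$ contains at most $n-k+1$ length-$k$ substrings, and then compare the resulting upper bound on $d_k/k$ for large $k$ with the lower bound on $d_1/1$ provided by Observation~\ref{obs: lower 2}.

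First I would assume, without loss of generality, that $S$ is not of the form $a^n$, so that Observation~\ref{obs: lower 2} applies and gives $d_1(S) \ge 2$; the excluded degenerate case $S=a^n$ is handled trivially since there $d_k = 1$ for all $k\in[n]$, making $\hat k = 1$. Next, I would note that for every $k \ge 1$ we have $d_k(S) \le n - k + 1$ because there are only that many length-$k$ windows in $S$ (and $d_k = 0$ for $k > n$). Dividing by $k$ gives
\[
    \frac{d_k(S)}{k} \le \frac{n-k+1}{k} = \frac{n}{k} - 1 + \frac{1}{k}.
\]

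The key step is then to consider any $k > n/2$: the above inequality yields $d_k(S)/k < 2 - 1 + 1/k \le 2$ (using $n \ge 2$ and $k \ge 1$). Combining this with $d_1(S)/1 \ge 2$, we obtain $d_k(S)/k < d_1(S)/1$ for every $k > n/2$, so no such $k$ can attain the maximum in the definition of $\hat k$. Therefore $\hat k \le n/2$, as claimed.

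I do not expect any real obstacle here; the argument is a short combinatorial estimate. The only slightly delicate point is ensuring a strict inequality $d_k/k < 2$ for $k > n/2$ so that the comparison with $d_1 \ge 2$ rules those $k$ out as candidates for the argmax, which the bound $d_k \le n-k+1$ provides cleanly.
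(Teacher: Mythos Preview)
Your proof is correct and follows essentially the same approach as the paper: both arguments use the trivial bound $d_k \le n-k+1$ together with Observation~\ref{obs: lower 2} (i.e., $d_1 \ge 2$) to rule out any $k>n/2$ as the maximizer. The only cosmetic difference is that the paper phrases it as a proof by contradiction (assuming $\hat k > n/2$ and deriving $d_{\hat k}/\hat k < 2 \le \delta$), whereas you argue directly that every $k>n/2$ satisfies $d_k/k < d_1/1$.
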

\begin{proof} 
    Assume that $\hat k > n/2$. Then it is immediate that $d_{\hat k} \le n - \hat k + 1 < n/2$ (as the right most character in the substring can be at index at most $n$). We now obtain that $d_{\hat k} / \hat k < 1$, contradicting Observation~\ref{obs: lower 2}.
\end{proof}

\subadditivity*
\begin{proof}
    Let $k_{S, T}$, $k_S$, and $k_T$ be such that $\delta(S, T) = |D_{k_{S, T}}(S) \cup D_{k_{S, T}}(T) | / k_{S, T}$, $\delta(S) = |D_{k_S}(S)|/k_S$, and $\delta(T) = |D_{k_T}(T)|/k_T$. Let, w.l.o.g., $\delta(S)=\max\{\delta(S), \delta(T)\}$. Then, 
    \begin{align*}
        \delta(S, T) 
        &= \frac{|D_{k_{S, T}}(S) \cup D_{k_{S, T}}(T) |}{k_{S, T}}
        \ge \frac{|D_{k_{S}}(S) \cup D_{k_{S}}(T) |}{k_{S}}
        \ge \frac{|D_{k_{S}}(S)|}{k_{S}}
        = \delta(S),
    \end{align*}
    where the second inequality uses the fact that $k_S$ is the maximizer for $S$.
    For the second claim, 
    \begin{align*}
        \delta(S, T) 
        &= \frac{|D_{k_{S, T}}(S) \cup D_{k_{S, T}}(T) |}{k_{S, T}}
        \le \frac{|D_{k_{S, T}}(S)|}{k_{S, T}} + \frac{|D_{k_{S, T}}(T) |}{k_{S, T}}\\
        &\le \frac{|D_{k_{S}}(S)|}{k_{S}} + \frac{|D_{k_{T}}(T) |}{k_{T}}
        = \delta(S) + \delta(T),
    \end{align*}
    where the second inequality uses the fact that $k_S$ and $k_T$ are the respective maximizers for $S$ and $T$.
\end{proof}

\NCDapprox*
\begin{proof}
    We start with the lower bound. Using the definition of $\NCD_{\tilde \delta}(S, T)$, we obtain
    \begin{align*}
        \NCD_{\tilde \delta}(S, T)
        &\ge \frac{(1 - \eps')\cdot \delta(S, T) - (1 + \eps') \min \{\delta(S), \delta(T)\}}{(1 + \eps')\max \{\delta(S), \delta(T)\}}\\
        & = \frac{1}{1 + \eps'} \cdot \NCD_{\delta}(S, T) - \frac{\eps'}{1 + \eps'} \cdot \frac{\delta(S, T) + \min \{\delta(S), \delta(T)\}}{\max \{\delta(S), \delta(T)\}}\\
        & = \NCD_{\delta}(S, T) - \frac{\eps'}{1 + \eps'} \cdot \Big(\frac{\delta(S, T) + \min \{\delta(S), \delta(T)\}}{\max \{\delta(S), \delta(T)\}} + \NCD_{\delta}(S, T)\Big)\\
        & \ge \NCD_{\delta}(S, T) - \frac{4\eps'}{1 + \eps'}\\
        &\ge \NCD_{\delta}(S, T) - \eps
    \end{align*}
    using the sub-additivity of $\delta$ from Lemma~\ref{lemma: subadditivity}, the fact that $\NCD_{\delta}(S, T)\le 1$ according to Corollary~\ref{cor: ncd range}, and the definition of $\eps'$.
    Similarly, now for the upper bound, we obtain
    \begin{align*}
        \NCD_{\tilde \delta}(S, T)
        &\le \frac{(1 + \eps')\cdot \delta(S, T) - (1 - \eps') \min \{\delta(S), \delta(T)\}}{(1 - \eps')\max \{\delta(S), \delta(T)\}}\\
        & = \frac{1}{1 - \eps'} \cdot \NCD_{\delta}(S, T) + \frac{\eps'}{1 - \eps'} \cdot \frac{\delta(S, T) + \min \{\delta(S), \delta(T)\}}{\max \{\delta(S), \delta(T)\}}\\
        & = \NCD_{\delta}(S, T) + \frac{\eps'}{1 - \eps'} \cdot \Big(\frac{\delta(S, T) + \min \{\delta(S), \delta(T)\}}{\max \{\delta(S), \delta(T)\}} + \NCD_{\delta}(S, T)\Big)\\
        & \le \NCD_{\delta}(S, T) + \frac{4\eps'}{1 - \eps'}\\
        &\le \NCD_{\delta}(S, T) + \eps
    \end{align*}
    again using the sub-additivity of $\delta$ from Lemma~\ref{lemma: subadditivity}, the fact that $\NCD_{\delta}(S, T)\le 1$ according to Corollary~\ref{cor: ncd range}, the definition of $\eps'$ and the assumption that $\eps<1$.
\end{proof}

\approxdelta*
\begin{proof}
    We first observe that $\lceil \alpha^{\lfloor \log_\alpha n\rfloor} \rceil \le n$. To see this, assume otherwise, i.e., that $\alpha^{\lfloor \log_\alpha n\rfloor} = n + x$ for some $x>0$. Then $x =  \alpha^{\lfloor \log_\alpha n\rfloor} - n \le \alpha^{ \log_\alpha n} - n = 0$, contradicting the assumption that $x>0$. It follows that $A\subseteq [n]$. Now, for the upper bound notice that $\tilde \delta \le \max \{ (1+\eps') d_k / k : k\in A\} \le (1 + \eps') \cdot \delta\le (1 + \eps) \cdot \delta$.
    
    For the lower bound, let $\hat k\in [n]$ be such that $\delta=d_{\hat k}/\hat k$ and let $i$ be such that $\lceil\alpha^{i-1}\rceil \le \hat k \le \lceil \alpha^i \rceil$. Notice that obviously $\lceil\alpha^{i-1}\rceil \in A$, but also $\lceil \alpha^i \rceil\in A$ as $\alpha^i \le \alpha \hat k\le \alpha n/2\le n$ by Observation~\ref{obs: hat k upper bound}.
    We now distinguish two cases: (1) $\lceil \alpha^i \rceil = \lceil \alpha^{i - 1} \rceil + 1$ and (2) $\lceil \alpha^i \rceil \ge \lceil \alpha^{i-1} \rceil + 2$. In case (1), we get that $\hat k\in \{\lceil\alpha^{i-1} \rceil, \lceil\alpha^{i} \rceil\} \subseteq A$ and consequently $\tilde \delta \ge \max \{ (1 - \eps') d_k / k : k\in A\} = (1 - \eps') \cdot \delta \ge (1 - \eps) \cdot \delta$. In case (2), it holds that 
    \begin{align}\label{formula: alphai lower bound}
        \alpha^{i - 1} \cdot \eps'
        = \alpha^i  - \alpha^{i - 1}
        \ge \lceil \alpha^i \rceil - 1 - \alpha^{i - 1}
        \ge \lceil \alpha^{i - 1} \rceil + 1 - \alpha^{i - 1}
        \ge 1.
    \end{align}
    Now let $\beta := \lceil \alpha^i \rceil - \hat k$. We note that $d_{j + 1}\ge d_j - 1$ for every $j$, as every distinct length-$j$ substring other than possibly $S[n - j + 1..n]$ gives at least one distinct length-$j + 1$ substring. Applying the same observation iteratively yields $d_{\lceil \alpha^i \rceil}\ge d_{\hat k} - \beta$. Hence
    \begin{align*}
        \tilde \delta 
        \ge (1 - \eps') \cdot \frac{d_{\lceil \alpha^i \rceil}}{\lceil \alpha^i \rceil}
        \ge (1 - \eps') \cdot \frac{d_{\hat k} - \beta}{\hat k + \beta}
        = (1 - \eps') \delta \cdot \frac{1 - \frac{\beta}{d_{\hat k}}}{1 + \frac{\beta}{{\hat k}}}
        \ge (1 - \eps') \delta \cdot \frac{1 - \frac{\beta}{2{\hat k}}}{1 + \frac{\beta}{{\hat k}}},
    \end{align*}
    where we used that $\delta(S) = d_{\hat k}/ \hat k \ge 2$ in the last step. We can now upper bound $\beta$ by $\lceil \alpha^i \rceil - \lceil \alpha^{i-1} \rceil \le \alpha^i + 1 - \alpha^{i-1} = \alpha^{i-1} \cdot \eps' + 1 \le \hat k \eps' + 1$. This yields 
    \[
        \hat \delta 
        \ge (1 - \eps')\delta \cdot \frac{1 - \frac{\eps'}{2}  - \frac{1}{2 \hat k}}{1 + \eps' + \frac{1}{\hat k}}
        \ge (1 - \eps')\delta \cdot \frac{1 - \eps'}{1 + 2 \eps'}
        \ge (1 - \eps) \cdot \delta,
    \]
    where the second inequality uses that $\hat k\ge 1/\eps'$ following from~\eqref{formula: alphai lower bound} and the last inequality uses the definition of $\eps'$.
\end{proof}

\section{Details on Bookmarking the RLBWT}\label{app:bookmarking}

We show how to extract $S[|S|-k+1]$ from the RLBWT, for any of the sampled lengths $k$. See also the example in Figure \ref{fig:example-bookmarking}.
We show how to initialize and update (upon character extensions of the stream) an index (bookmark) $j$ such that $BWT[j] = S[|S|-k+1]$. This allows us retrieving $S[|S|-k+1]$ in $O(\log |S|) \subseteq O(\log n)$ time with a random access operation $BWT[j]$ on the RLBWT data structure.

We first discuss how to initialize the bookmark $j$ as soon as the stream's length becomes $S = k+1$  (before that, the window of the last $k$ characters is not completely filled).
The initialization works by setting $j = BWT.LF(i)$, where $i$ is the position such that $BWT[i] = \$$. Since the LF mapping on the BWT of the reversed stream corresponds to advancing one position in the stream, it is easy to see that, after this operation, it holds $BWT[j] = S[|S|-k+1]$. See Figure \ref{fig:example-bookmarking} for an example.

Suppose we are storing the bookmark $j$ such that $BWT[j] = S[|S|-k+1]$.
We now show how to update $j$ when a new character $a$ arrives; let  $S' = Sa$ be the updated stream. Our goal is to modify $j$ so that $BWT[j] = S'[|S'|-k+1]$ holds. The observation is that, upon the extension of the stream by one character $a$, the algorithm of \cite{policriti2018lz77} modifies the $BWT$ as follows: letting $i$ being the index such that $BWT[i]=\$$, the algorithm (1) replaces $BWT[i] \leftarrow a$, and (2) inserts \$ in the position $i'$ corresponding to the lexicographic rank of the new reversed stream $(Sa)^R$ (position $i'$ is computed in $O(\log|S|)$ time using basic operations on the RLBWT, see \cite{policriti2018lz77} and Example \ref{fig:example-bookmarking}): the new BWT becomes $BWT \leftarrow BWT[1,i'-1]\cdot \$ \cdot BWT[i'+1,|S|]$. If $j < i'$ (i.e. \$ is inserted after position $j$), then after these modification we have that $BWT[j] = S'[|S'|-k]$; if, on the other hand, $j \geq i'$ (i.e. \$ is inserted before position $j$), then we increment $j$ as $j\leftarrow j+1$, and $BWT[j] = S'[|S'|-k]$ holds also in this case. 
Finally, we need to ``advance'' $j$ by one position on the stream; this operation corresponds to one LF mapping step on the BWT: $j \leftarrow BWT.LF(j)$ ($O(\log |S|)$ time). After these operations, we finally have that $BWT[j] = S'[|S'|-k+1]$.

\section{Detailed Experimental Results}\label{app:experiments}

\subsection{Estimation of $d_k$.} 
As mentioned above, there are two types of errors in the computation of the approximation $\tilde \delta$ of $\delta$:  (1) the error obtained when approximating the values $d_k$ by $\tilde d_k$ and (2) the error due to the restriction of the string's offsets $[n]$ to the ``sampled set'' $A$. The error of type~(1) itself has two sources, namely (1.1) errors due to collisions when computing fingerprints with Rabin's hash function and (1.2) errors due to the count-distinct sketch when applied to the fingerprints. We experimentally evaluated the error of type~(1.1) and (1.2) as follows.
For the Pizza\&Chili repetitive corpus, we compute the exact values of $d_k$ and their estimated values $\tilde d_k$ for $k\in\{2^i:0\le i \le 7\}$. 
We observe that the error (1.1) caused by collisions in Rabin's fingerprint were negligible; the error in the ratio of the distinct number of fingerprints and the actual number of distinct substrings was less than 0.01\%. The error due to the count-distinct sketch (1.2) was dependent on its parameter: the number of registers used for estimation. It is worth noting that the number of registers does not affect the time complexity when updating sketches, but only affects the space usage by a constant factor (and the time to compute the actual estimation at the end, which is negligible). When more than $2^{14}$ registers were used for count-distinct sketches, the maximum relative error was observed to be below $2\%$, and the average error on $d_k$ was below $0.5\%$; see Table~\ref{tbl:approx:dk}.

\begin{table}[h]
\centering

\begin{tabular}{c|cccc}\hline
The number of registers & $2^{10}$ & $2^{12}$ & $2^{14}$ & $2^{16}$ \\\hline
Maximum & 0.0641 & 0.0313	& 0.0186 & 0.0077 \\
Average & 0.0204 & 0.0089 & 0.0041 & 0.0017 \\\hline
\end{tabular}
\caption{The relative error measured in estimating $d_k$} \label{tbl:approx:dk}
\end{table}

\subsection{Experiments on phylogenetic tree reconstruction.} 

To show similar behavior of $\NCD_{\mathtt{xz}}$, $\NCD_\delta$, and $\NCD_{\tilde\delta}$, we conducted experiments on phylogenetic tree reconstruction using \texttt{Gene-Trees} dataset from AF project\footnote{\url{https://afproject.org/}}. It contains 11 groups of sequences (651 sequences in total) where each group yields a tree. We constructed 11 phylogenetic trees (i.e. one tree for each group) for each of the NCD measures, and compare the constructed trees by measuring the normalized Robinson-Foulds (nRF) distance, a widely-used distance measure for this purpose. The distance tends to 0 as trees become similar, and tends to 1 when comparing with a random tree. The average nRF between $\NCD_{\mathtt{xz}}$ and $\NCD_\delta$ was measured as 0.2, indicating similar trees were reconstructed. The average nRF between $\NCD_{\delta}$ and $\NCD_{\tilde\delta}$ ranges from 0.115 to 0.250 depending on the parameters. For ease of interpretation of these values, we depict two similar phylogenetic trees with nRF=0.194 in Figure~\ref{fig:appendix:phylogeny}, which is an actual example of reconstructed trees using NCD with $\tilde\delta$ and $\mathtt{xz}$.

\begin{figure}
    \centering
    \scalebox{.9}{
        \begin{forest}
  forked edges,
  /tikz/every label/.append style={font=\sffamily},
  before typesetting nodes={
    delay={
      where content={}{coordinate}{},
    },
    where n children=0{tier=terminus, label/.process={Ow{content}{right:#1}}, content=}{},
  },
  for tree={
    grow'=0,
    s sep'+=2pt,
    l=1pt,
    l sep=9pt,
  },
[[CITE1\_MACMU][[CITE1\_PANTR][[[CITE1\_BOVIN][CITE1\_CANFA]][[[CITE1\_MOUSE][CITE1\_RAT]][[CITE1\_ORNAN][[CITE1\_DANRE][[[[[[[[[[CITE2\_HUMAN][CITE2\_PANTR]][CITE2\_BOVIN]][[CITE2\_MOUSE][CITE2\_RAT]]][CITE2\_MONDO]][CITE2\_MACMU]][CITE2\_CHICK]][CITE2\_XENTR]][CITE2\_DANRE]][[[[CITE4\_CHICK][[CITE4a\_DANRE][[CITE4a\_TAKRU][[CITE4b\_DANRE][CITE4b\_TAKRU]]]]][CITE4\_XENTR]][[[[[[[CITE4\_HUMAN][CITE4\_PANTR]][CITE4\_MACMU]][CITE4\_BOVIN]][[CITE4\_MOUSE][CITE4\_RAT]]][CITE4\_MONDO]][[CITED\_BRAFL][CITED\_NEMVE]]]]]]]]]][CITE1\_HUMAN]]
\end{forest}\ \ 
        \begin{forest}
  forked edges,
  /tikz/every label/.append style={font=\sffamily},
  before typesetting nodes={
    delay={
      where content={}{coordinate}{},
    },
    where n children=0{tier=terminus, label/.process={Ow{content}{right:#1}}, content=}{},
  },
  for tree={
    grow'=0,
    s sep'+=2pt,
    l=1pt,
    l sep=9pt,
  },
[[CITE1\_PANTR][[CITE1\_MACMU][[[CITE1\_BOVIN][CITE1\_CANFA]][[[CITE1\_MOUSE][CITE1\_RAT]][[[CITE1\_ORNAN][CITE1\_DANRE]]
[
[[[[[[[[[[[CITE2\_HUMAN][CITE2\_PANTR]][CITE2\_BOVIN]][[CITE2\_MOUSE][CITE2\_RAT]]][CITE2\_MONDO]][CITE2\_MACMU]][CITE2\_CHICK]][CITE2\_XENTR]][CITE2\_DANRE]][CITED\_BRAFL]][[[CITE4\_CHICK][[[CITE4a\_DANRE][[CITE4b\_DANRE][CITE4b\_TAKRU]]][CITE4a\_TAKRU]]][CITE4\_XENTR]]]
[[[[[[[CITE4\_HUMAN][CITE4\_PANTR]][CITE4\_MACMU]][CITE4\_BOVIN]][[CITE4\_MOUSE][CITE4\_RAT]]][CITE4\_MONDO]][CITED\_NEMVE]]
]
]]]][CITE1\_HUMAN]]
\end{forest}
    }
    \caption{Two similar phylogenetic trees constructed using normalized compression distance (NCD) with estimated normalized substring complexity $\tilde\delta$ (left) and a popular compression software $\mathtt{xz}$ (right). Normalized Robinson-Foulds distance is 0.194.}
    \label{fig:appendix:phylogeny}
\end{figure}

\paragraph{Running Time.} To construct a phylogenetic tree from a sequence set, we usually need to compute all-pair distances. When sequences are long, computing NCDs can be quite costly because we need to compress concatenated sequences for all pairs of sequence in the input set. On the other hand, our sketching can be more efficient because we only need to compute sketches for each sequence, then merging sketches can be done very quickly compared to processing the entire sequences all over again. To demonstrate this, we measure the running time for computing all-pair NCDs on \texttt{E.coli}
 dataset from AF Project that consists of 29 sequences of average length 81,588. Computing all-pair NCDs with \texttt{xz} and the exact $\delta$ took about 42 and 24 minutes. On the other hand, our sketching method took only 3 minutes.

\paragraph{Funding}

Ruben Becker, Davide Cenzato, Sung-Hwan Kim, Bojana Kodric, and Nicola Prezza are funded by the European Union (ERC, REGINDEX, 101039208). Views and opinions expressed are however those of the author(s) only and do not necessarily reflect those of the European Union or the European Research Council. Neither the European Union nor the granting authority can be held responsible for them.

\end{document}